\newcommand{\assign}{\,\leftarrow\,}
\author{
Minming Li
    \thanks{Department of Computer Science,
    City University of Hong Kong,
    83 Tat Chee Avenue, Kowloon,
    Hong Kong. Email: \texttt{minmli@cs.cityu.edu.hk} }
\and Frances F. Yao
    \thanks{Institute for Interdisciplinary Information Sciences,
    Tsinghua University,
    Beijing,
    China. Email: \texttt{csfyao@cityu.edu.hk} }
\and
Hao Yuan
    \thanks{Department of Computer Science,
    City University of Hong Kong,
    83 Tat Chee Avenue, Kowloon,
    Hong Kong. Email: \texttt{haoyuan@cityu.edu.hk} }
}
\begin{document}

\title{An $O(n^2)$ Algorithm for Computing Optimal Continuous Voltage Schedules}
\maketitle
%
%
%
%


\begin{abstract}
Dynamic Voltage Scaling techniques allow the processor to set its
speed dynamically in order to reduce energy consumption. In the
continuous model, the processor can run at any speed, while in the
discrete model, the processor can only run at finite number of
speeds given as input. The current best algorithm for computing the
optimal schedules for the continuous model runs at $O(n^2\log n)$
time for scheduling $n$ jobs. In this paper, we improve the running
time to $O(n^2)$ by speeding up the calculation of s-schedules using
a more refined data structure. For the discrete model, we improve
the computation of the optimal schedule from the current best
$O(dn\log n)$ to $O(n\log \max\{d,n\})$ where $d$ is the number of
allowed speeds.
\end{abstract}



\section{Introduction}
Energy efficiency is always a primary concern for chip designers not
only for the sake of prolonging the lifetime of batteries which are
the major power supply of portable electronic devices but also for
the environmental protection purpose when large facilities like data
centers are involved. Currently, processors capable of operating at
a range of frequencies are already available, such as Intel's
SpeedStep technology and AMD's PowerNow technology. The capability
of the processor to change voltages is often referred to in the
literature as DVS (Dynamic Voltage Scaling) techniques. For DVS
processors, since energy consumption is at least a quadratic
function of the supply voltage (which is proportional to CPU speed),
it saves energy to let the processor run at the lowest possible
speed while still satisfying all the timing constraints, rather than
running at full speed and then switching to idle.

One of the earliest theoretical models for DVS was introduced by
Yao, Demers and Shenker \cite{Yao95} in 1995. They assumed that the
processor can run at any speed and each job has an arrival time and
a deadline. They gave a characterization of the minimum-energy
schedule (MES) and an $O(n^3)$ algorithm for computing it which is
later improved to $O(n^2\log n)$ by \cite{Li06}. No special
assumption was made on the power consumption function except
convexity. Several online heuristics were also considered including
the Average Rate Heuristic (AVR) and Optimal Available Heuristic
(OPA). Under the common assumption of power function
$P(s)=s^{\alpha}$, they showed that AVR has a competitive ratio of
$2^{\alpha-1}\alpha^{\alpha}$ for all job sets. Thus its energy
consumption is at most a constant times the minimum required. Later
on, under various related models and assumptions, more algorithms
for energy-efficient scheduling have been proposed.

Bansal et al. \cite{Bansal04} further investigated the online
heuristics for the model proposed by \cite{Yao95} and proved that
the heuristic OPA has a tight competitive ratio of $\alpha^{\alpha}$
for all job sets. For the temperature model where the temperature of
the processor is not allowed to exceed a certain thermal threshold,
they showed how to solve it within any error bound in polynomial
time. Recently, Bansal et al. \cite{Bansal08} showed that the
competitive analysis of AVR heuristic given in \cite{Yao95} is
essentially tight. Quan and Hu \cite{Quan01} considered scheduling
jobs with fixed priorities and characterized the optimal schedule
through transformations to MES \cite{Yao95}. Yun and Kim
\cite{Yun03} later on showed the NP-hardness to compute the optimal
schedule.

Pruhs et al. \cite{Pruhs04} studied the problem of minimizing the
average flow time of a sequence of jobs when a fixed amount of
energy is available and gave a polynomial time offline algorithm for
unit-size jobs. Bunde \cite{Bunde06} extended this problem to the
multiprocessor scenario and gave some nice results for unit-size
jobs. Chan et al. \cite{Soda07} investigated a slightly more
realistic model where the maximum speed is bounded. They proposed an
online algorithm which is $O(1)$-competitive in both energy
consumption and throughput. More work on the speed bounded model can
be found in \cite{ICALP08}\cite{TAMC07}\cite{ISAAC07}.

Ishihara and Yasuura \cite{Ishihara98} initiated the research on
discrete DVS problem where a CPU can only run at a set of given
speeds. They solved the case when the processor is only allowed to
run at two different speeds. Kwon and Kim \cite{Kwon03} extended it
to the general discrete DVS model where the processor is allowed to
run at speeds chosen from a finite speed set. They gave an $O(n^3)$
algorithm for this problem based on the MES algorithm in
\cite{Yao95}, which is later improved in \cite{Li05} to $O(dn\log
n)$ where $d$ is the allowed number of speeds.

When the CPU can only change speed gradually instead of instantly,
\cite{Qu98} discussed about some special cases that can be solved
optimally in polynomial time. Later, Wu et al. \cite{Wu09} extended
the polynomial solvability to jobs with agreeable deadlines. Irani
et al. \cite{Irani03} investigated an extended scenario where the
processor can be put into a low-power sleep state when idle. A
certain amount of energy is needed when the processor changes from
the sleep state to the active state. The technique of switching
processors from idle to sleep and back to idle is called Dynamic
Power Management (DPM) which is the other major technique for energy
efficiency. They gave an offline algorithm that achieves
2-approximation and online algorithms with constant competitive
ratios. Recently, Albers and Antoniadis \cite{SODA12} proved the
NP-hardness of the above problem and also showed some lower bounds
of the approximation ratio. Pruhs et al. \cite{Pruhs10} introduced
profit into DVS scheduling. They assume that the profit obtained
from a job is a function on its finishing time and on the other hand
money needs to be paid to buy energy to execute jobs. They give a
lower bound on how good an online algorithm can be and also give a
constant competitive ratio online algorithm in the resource
augmentation setting. A survey on algorithmic problems in power
management for DVS by Irani and Pruhs can be found in
\cite{Irani05}. Most recent surveys by Albers can be found in
\cite{Albers10}\cite{Albers11b}.

In \cite{LiB06}, the authors showed that the optimal schedule for
tree structured jobs can be computed in $O(n^2)$ time. In this
paper, we prove that the optimal schedule for general jobs can also
be computed in $O(n^2)$ time, improving upon the previously best
known $O(n^2\log n)$ result \cite{Li06}. The remaining paper is organized
as follows. Section 2 will give the problem formulation. Section 3
will discuss the linear implementation of an important tool
--- the s-schedule used in the algorithm in \cite{Li06}. Then we use the
linear implementation to improve the calculation of the optimal
schedule in Section 4. In Section 5, we give improvements in the
computation complexity of the optimal schedule for the discrete
model. Finally, we conclude the paper in Section 6.

\section{Models and Preliminaries}

We consider the single processor setting. A job set
$J=\{j_1,j_2,\ldots,j_n\}$ over $[0,1]$ is given where each job
$j_k$ is characterized by three parameters: arrival time $a_k$,
deadline $b_k$, and workload $R_k$. Here workload means the required
number of CPU cycles. We also refer to $[a_k, b_k]\subseteq [0,1]$
as the interval of $j_k$. A {\it schedule} $S$ for $J$ is a pair of
functions $(s(t), job(t))$ which defines the processor speed and the
job being executed at time $t$ respectively. Both functions are
assumed to be piecewise continuous with finitely many
discontinuities. A {\it feasible} schedule must give each job its
required workload between its arrival time and deadline with perhaps
intermittent execution. We assume that the power $P$, or energy
consumed per unit time, is $P(s)=s^{\alpha}$ ($\alpha\geq 2$) where
$s$ is the processor speed. The total energy consumed by a schedule
$S$ is $E(S)=\int_0^1 P(s(t))dt$. The goal of the min-energy
feasibility scheduling problem is to find a feasible schedule that
minimizes $E(S)$ for any given job set $J$. We refer to this problem
as the continuous {\it DVS} scheduling problem.

For the continuous DVS scheduling problem, the optimal schedule
$S_{opt}$ is characterized by using the notion of a \textit{critical
interval} for $J$, which is an interval $I$ in which a group of jobs
must be scheduled at maximum constant speed $g(I)$ in any optimal
schedule for $J$. The algorithm MES in \cite{Yao95} proceeds by
identifying such a critical interval $I$, scheduling those \lq
critical' jobs at speed $g(I)$ over $I$, then constructing a
subproblem for the remaining jobs and solving it recursively. The
details are given below.

\begin{definition}
For any interval $I \subseteq [0,1]$, we use $J_{I}$ to denote the
subset of jobs in $J$ whose intervals are completely contained in
$I$. The intensity of an interval $I$ is defined to be
$g(I)=(\sum_{j_k\in J_{I}} R_k)/|I|$.
\end{definition}

An interval $I^\ast$ achieving maximum $g(I)$ over all possible
intervals $I$ defines a critical interval for the current job set.
It is known that the subset of jobs $J_{I^\ast}$ can be feasibly
scheduled at speed $g(I^\ast)$ over $I^\ast$ by the earliest
deadline first (EDF) principle. That is, at any time $t$, a job
which is waiting to be executed and having earliest deadline will be
executed during $[t,t+\epsilon]$. The interval $I^\ast$ is then
removed from $[0,1]$; all the remaining job intervals $[a_k, b_k]$
are updated to reflect the removal, and the algorithm recurses. We
denote the optimal schedule which guarantees feasibility and
consumes minimum energy in the continuous DVS model as \textit{OPT}.

The authors in \cite{Li06} later observed that in fact the critical
intervals do not need to be located one after another. Instead, one
can use a concept called $s$-schedule defined below to do
bipartition on jobs which gradually approaches the optimal speed
curve.

\begin{definition}
For any constant $s$, the $s$-schedule for $J$ is an EDF schedule
which uses a constant speed $s$ in executing any jobs of $J$. It will
give up a job when the deadline of the job has passed. In general,
$s$-schedules may have idle periods or unfinished jobs.
\end{definition}

\begin{definition}
In a schedule $S$, a maximal subinterval of $[0,1]$ devoted to
executing the same job $j_k$ is called an execution interval for
$j_k$ (with respect to $S$). Denote by $I_k(S)$ the union of all
execution intervals for $j_k$ with respect to $S$. Execution
intervals with respect to the $s$-schedule will be called
$s$-execution intervals.
\end{definition}

It is easy to see that the $s$-schedule for $n$ jobs contains at
most $2n$ $s$-execution intervals, since the end of each execution
interval (including an idle interval) corresponds to the moment when
either a job is finished or a new job arrives. Also, the
$s$-schedule can be computed in $O(n\log n)$ time by using a
priority queue to keep all jobs currently available, prioritized by
their deadlines. In the next section, we will show that the
$s$-schedule can be computed in linear time.

\section{Computing an $s$-Schedule in Linear Time}

In this work, we assume that the underlying computational model is the unit-cost RAM model with word size $\Theta(\log n)$. This model is assumed only for the purpose of using a special union-find algorithm by Gabow and Tarjan \cite{Gabow1983}.

\begin{theorem} \label{lemma:s-schedule}
If for each $k$, the rank of $a_k$ in $\{ a_1, a_2, \ldots, a_n\}$ and the rank of $b_k$ in $\{b_1, b_2, \ldots, b_n \}$ are pre-computed, then the $s$-schedule can be computed in linear time in the unit-cost RAM model.
\end{theorem}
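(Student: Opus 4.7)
The plan is to process jobs in deadline-rank order and greedily allocate each to the earliest free portion of its availability window, supporting the ``find next free slot'' query in amortized constant time via Gabow and Tarjan's union--find on a static path \cite{Gabow1983}.

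First, I would merge the sorted arrival list and the sorted deadline list, using the precomputed ranks, into one chronological event list $e_1 < e_2 < \cdots < e_{2n}$ in linear time. These events cut $[0,1]$ into event-free sub-intervals $T_0, T_1, \ldots, T_{2n}$ of lengths $\ell_0, \ldots, \ell_{2n}$; the same pass records, for each job $k$, the indices $\alpha_k, \beta_k$ of the first and last sub-intervals lying inside $[a_k, b_k]$. I also maintain a counter $c_i$ per sub-interval, initially zero, tracking how much of $T_i$ has already been committed.

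Second, I would sweep jobs in increasing deadline-rank order. For a job $j_k$ of rank $r$, set its remaining time-workload $u = R_k/s$ and iterate: compute $i = \mathrm{Find}(\alpha_k)$ on a union--find over the static chain $0 \to 1 \to \cdots \to 2n+1$; if $i > \beta_k$, the job gives up; otherwise set $v = \min(u, \ell_i - c_i)$, emit $[e_i + c_i,\, e_i + c_i + v]$ as an $s$-execution interval of $j_k$, update $c_i \leftarrow c_i + v$ and $u \leftarrow u - v$, and when $c_i = \ell_i$ unite $i$ with $i+1$. The iteration ends when $u=0$ or the job gives up.

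Correctness is an induction on $r$: within any event-free $T_i$ no arrivals or deadlines fire, so the arrived-and-alive pool can only shrink via completions, and EDF-at-speed-$s$ at any instant $t \in T_i$ picks the smallest-rank job whose window contains $T_i$ and still has workload --- exactly the rank my sweep assigns to offset $t - e_i$ inside $T_i$. For complexity, each of the $\le 2n+1$ sub-intervals is consumed at most once, so the sweep issues $O(n)$ unions and $O(n)$ finds on a static path of $2n+2$ nodes; the Gabow--Tarjan algorithm \cite{Gabow1983} handles these in total $O(n)$ time on the unit-cost RAM with $\Theta(\log n)$-bit words. Event-merging, index-tagging, and any post-processing to merge abutting emitted intervals are all $O(n)$, yielding the claimed linear bound. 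The main delicate point is verifying that this rank-by-rank greedy coincides with the dynamically defined EDF-at-speed-$s$ rule, and the event-free structure of sub-intervals is what makes this a clean induction with essentially one case per sub-interval.
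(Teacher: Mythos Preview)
Your proposal is correct and follows essentially the same approach as the paper: process jobs in deadline-rank order, use Gabow--Tarjan interval union--find on a static path to locate the earliest free slot in amortized constant time, and count $O(n)$ finds and unions overall. The only cosmetic difference is that the paper takes just the arrival times (plus two sentinels) as breakpoints and handles deadlines via an explicit test $e_p \ge b_k$ and the clamp $u=\min\{b_k,t_{p+1}\}$, whereas you include deadlines among the breakpoints; both choices yield the same $O(n)$ bound.
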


We make the following two assumptions:
\begin{itemize}
\item the jobs are already sorted according to their deadlines;
\item for each job $j_k$, we know the rank of $a_k$ in the arrival time set $\{ a_1, a_2, \ldots, a_n  \}$.
\end{itemize}
Because of the first assumption and without loss of generality, we assume that  $b_1 \le b_2 \le \ldots \le b_n$.
Algorithm \ref{algo:naive} schedules the jobs in the order of their deadlines. When scheduling job $k$, the algorithm tries to search for an earliest available time interval and schedule the job in it, and then repeat the process until all the workload of the job is scheduled or unable to find such a time interval before the deadline. A more detailed discussion of the algorithm is given below.


\begin{algorithm}[t]
\caption{Algorithm for Computing an $s$-Schedule }
\label{algo:naive} Initialize $e_i \assign t_i$ for $1 \le i < m$.
\nllabel{line:initE} \; \For{k=1 \KwTo n} {
    Let $i$ be the rank of $a_k$ in $T$, i.e., $t_i=a_k$. \nllabel{line:set_i} \;
    Initialize $r \assign R_k$, where $r$ denotes the remaining workload to be scheduled. \;
    \While{$r>0$ \nllabel{line:beginwhile} }
    {
        Search for an earliest \textit{non-empty} canonical time interval $[e_{p}, t_{p+1})$ such that $e_p \ge t_i$.  \nllabel{line:findEarliest} \;

        \If
        {
            $e_p \ge b_k$ \nllabel{line:deadlineCheck:begin}
        }  
        {
            Break the while loop because the job cannot be finished.
        }
        \nllabel{line:deadlineCheck:end}

        Set $u \assign \min\{  b_k, t_{p+1} \}$. \nllabel{line:setU}

        \eIf{$r > s \cdot (u - e_{p})$}
        {
            Schedule job $k$ at $[e_{p}, u)$. \nllabel{line:schedule1} \;
            Update $e_{p} \assign u$.  \nllabel{line:updateE1}  \;
            Update $r \assign r -  s \cdot (u - e_{p})$. \nllabel{line:updateR1}  \;
        }
        {
            Schedule job $k$ at $[e_{p}, e_{p}+r/s)$. \nllabel{line:schedule2} \;
            Update $e_{p} \assign e_{p}+r/s$.             \nllabel{line:updateE2} \;
            Update $r \assign 0$.
            \nllabel{line:updateR2}
        }
    } \nllabel{line:endwhile}
}
\end{algorithm}

Let $T$ be $\{a_1, a_2, \ldots, a_n, 1, 1+\epsilon\}$. Note that the
times ``$1$'' and ``$1+\epsilon$'' (where $\epsilon$ is any fixed
positive constant) are included in $T$ for simplifying the
presentation of the algorithm. Denote the size of $T$ by $m$. Denote
$t_i$ to be the $i$-th smallest element in $T$. Note that the rank of any $a_k$ in $T$ is known. During the running
of the algorithm, we will maintain the following data structure:
\begin{definition}
For each $1\le i < m$, the algorithm maintains a value $e_i$,
 whose value is in the range $[t_i, t_{i+1}]$. The meaning of $e_i$ is that: the time
interval $[t_i, e_i)$ is fully occupied by some jobs, and the time
interval $[e_i, t_{i+1})$ is idle.
\end{definition}
If $[t_i, t_{i+1})$ is fully
occupied, then $e_i$ is $t_{i+1}$. Note that such a time $e_i$
always exists during the running of the algorithm, which will be
shown later when we discuss how to maintain $e_i$. At the beginning
of the algorithm, we assume that the processor is idle for the whole
time period. That means $e_i=t_i$ for $1 \le i < m$ (see line
\ref{line:initE} of Algorithm \ref{algo:naive}).

\begin{example}
An example for demonstrating the usage of the $e_i$ data structure is given below: Assume that $T=\{0.1, 0.2, 0.3, 0.4, 0.5, 0.6, 0.7, 0.8, 0.9, 1, 1+\epsilon \}$. At some point during the execution of the algorithm, if some jobs have been scheduled to run at time intervals $[0.2, 0.35), [0.6, 0.86), [0.9, 0.92)$, then we will have $e_1 = 0.1$, $e_2 = 0.3$, $e_3 = 0.35$, $e_4=0.4$, $e_5=0.5$, $e_6=0.7$, $e_7=0.8$, $e_8=0.86$, $e_9=0.92$, and $e_{10}=1$.
\end{example}

\begin{figure*}
\centering
\includegraphics[bb = 150 600 550 800]{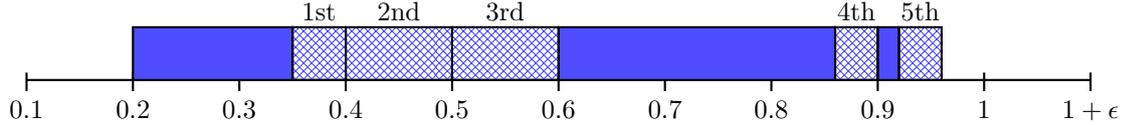}
\caption{An illustration for example 2.} \label{figure1}
\end{figure*}

Before we analyze the algorithm, we need to define an important concept called \textrm{canonical time interval}.
\begin{definition}
During the running of the algorithm, a canonical time interval is a time interval of the form $[e_{p}, t_{p+1})$, where $1 \le p <m$. When  $e_p=t_{p+1}$, we call it an empty canonical time interval.
\end{definition}
Note that a non-empty canonical time interval is always idle based on the definition of $e_p$.
Any arrival time $a_k$ will not lie inside any canonical time interval but it is possible that $a_k$ will touch any of the two ending points, i.e., for any $1 \le p < m$, we have either $a_k \le e_p$ or $a_k \ge t_{p+1}$.
Therefore, if we want to search for a time interval to run a job at or after time $a_k$, then we should always look for the earliest non-empty canonical time interval $[e_p, t_{p+1})$ where $e_p \ge a_k$.

In Algorithm \ref{algo:naive}, a variable $r$ is used to track the workload to be scheduled. Lines \ref{line:beginwhile}-\ref{line:endwhile} try to schedule $j_k$  as early as possible if $r>0$. Line \ref{line:findEarliest} tries to search for an earliest \textrm{non-empty} canonical time interval $[e_{p}, t_{p+1})$ no earlier than the arrival time of $j_k$ (i.e., $e_p \ge a_k$). Such a $p$ always exists because there is always a non-empty canonical time interval $[1, 1+\epsilon)$.
Line \ref{line:deadlineCheck:begin}-\ref{line:deadlineCheck:end} means that, if $e_p$ is not earlier than the deadline of $j_k$, then the job cannot be finished. Line \ref{line:setU} sets a value of $u$, whose meaning is that $[e_p, u)$ can be used to schedule the job. The value of $u$ is no later than the deadline of $j_k$.
Lines \ref{line:schedule1}-\ref{line:updateR1} process the case when the remaining workload of $j_k$ cannot be finished in the time interval $[e_p,u)$.
Lines \ref{line:schedule2}-\ref{line:updateR2} process the case when the remaining workload of $j_k$ can be finished in the time interval $[e_p, u)$.
In the first case, line \ref{line:updateE1} updates $e_{p}$ to $u$ because the time interval $[t_{p}, u)$ is occupied and $[u,t_{p+1})$ is idle. In the second case, a time of $r/s$ is occupied by $j_k$ after the time $e_{p}$, so $e_{p}$ is increased by $r/s$.

\begin{example}
Following the example provided in the previous example, assume that the speed is $s=1$, if we are to schedule a job $j_k$, where $a_k=0.3, b_k=0.96, R_k=0.35$, the algorithm will proceed as follows: At the beginning, $r$ will be initialized to $0.35$, and $i=3$ (because $a_k=0.3=t_3$; see line \ref{line:set_i}). Line \ref{line:findEarliest} will then get the interval $[e_3, t_4)=[0.35,0.4)$ as an earliest non-empty canonical time interval, and a workload of $(0.4-0.35)s=0.05$ is scheduled at that time interval. The values of $e_3$ will be updated to $0.4$ accordingly.
Now, $r$ becomes $0.35-0.05=0.3$, and line \ref{line:findEarliest} will get the time interval $[e_4,t_5)=[0.4,0.5)$ to schedule the job. After that $r$ becomes $0.3-(0.5-0.4)s=0.2$, and $e_4=0.5$. Line \ref{line:findEarliest} then gets the time interval $[e_5,t_6)=[0.5,0.6)$ to schedule the job, and $r$ will be further reduced to $0.1$. The values of $e_5$ will be updated to $0.6$. The next time interval found will be $[e_8,t_9)=[0.86,0.9)$, and $r$ will become $0.1-(0.9-0.86)s=0.06$. The values of $e_8$ will be updated to $0.9$. The remaining earliest non-empty canonical time interval is $[e_9, t_{10})=[0.92, 1)$, but the deadline of the job is $0.96$, so only $[0.92,0.96)$ will be used to schedule the job, and $r$ will be $0.02$. The value of $e_9$ is then updated to $0.96$.
Finally, $[e_9, t_{10})=[0.96,1)$ is the remaining earliest non-empty canonical time interval, but $e_9 \ge b_k$, so line \ref{line:deadlineCheck:begin}-\ref{line:deadlineCheck:end} will break the loop, and $j_k$ will be an unfinished job. A graphical illustration is provided in Figure \ref{figure1}. The solid rectangles represent the time intervals occupied by some jobs before scheduling $j_k$. The cross-hatched rectangles represent the time intervals that are used to schedule $j_k$. The $q$-th cross-hatched rectangle (where $1 \le q \le 5$) is the $q$-th time interval scheduled according to this example. Note that all the cross-hatched rectangles except the $5$-th one are canonical time intervals right before scheduling $j_k$.
\end{example}

The most critical part of the algorithm is Line \ref{line:findEarliest}, which can be implemented efficiently by the following folklore method using a special union-find algorithm developed by Gabow and Tarjan \cite{Gabow1983} (see also the discussion of the decremental marked ancestor problem \cite{AlstrupHR1998}). At the beginning, there is a set $\{ i \} $ for each $1 \le i < m$. The name of a set is the largest element of the set.
Whenever $e_p$ is updated to $t_{p+1}$ (i.e., there is not any idle time in the interval $[t_p, t_{p+1})$), we make a union of the set containing $p$ and  the set containing $p+1$, and set the name of this set to be the name of the set containing $p+1$. After the union, the two old sets are destroyed. In this way, a set is always an interval of integers.
For a set whose elements are $\{q,q+1,\ldots, p\}$, the semantic meaning is that, $[t_q, e_p)$ is fully scheduled but $[e_p, t_{p+1})$ is idle. Therefore, to search for an earliest non-empty canonical time interval beginning at or after time $t_i$, we can find the set containing $i$, and let $p$ be the name of the set, then $[e_p, t_{p+1})$ is the required time interval.

\begin{example}
An example of the above union-find process for scheduling $j_k$ in the previous example is given below: Before scheduling $j_k$, we have the sets $\{1\}$, $\{2,3\}$, $\{4\}$, $\{5\}$, $\{6,7,8\}$, $\{9\}$, $\{10\}$.
The execution of line \ref{line:findEarliest} will always try to search for a set that contains the element $i=3$. Therefore, the first execution will find the set $\{2,3\}$, so $p$ will be $3$. After that, $e_3$ becomes $t_4=0.4$, so the algorithm needs to make a union of the sets $\{2,3\}$ and $\{4\}$ to get $\{2,3,4\}$. Similarly, the next execution will find the set $\{2,3,4\}$, so $p=4$. The algorithm will then make a union of $\{2,3,4\}$ and $\{5\}$ to get $\{2,3,4,5\}$. For the next execution, the set $\{2,3,4,5\}$ will be found, and it will be merged with $\{6,7,8\}$ to get $\{2,3,4,5,6,7,8\}$. In this case, $p=8$, and the earliest non-empty canonical time interval is $[e_p,t_{p+1})=[0.86,0.9)$. After $e_8$ is updated to $t_9=0.9$, the algorithm will merge $\{2,3,4,5,6,7,8\}$ with $\{9\}$ and obtain $\{2,3,4,5,6,7,8,9\}$. Therefore, the next execution of line \ref{line:findEarliest} will get $p=9$. After the time interval $[0.92,0.96)$ is scheduled and $e_9$ is updated to $0.96$, so the algorithm will not do any union. The last execution finds $p=9$ again, and a loop break is performed.
\end{example}

Now, we we analyze the time complexity of the algorithm.
\begin{lemma}
Each set always contains continuous integers.
\end{lemma}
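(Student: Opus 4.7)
The plan is to prove this by induction on the sequence of \texttt{union} operations performed by the algorithm. At initialization each set is a singleton $\{i\}$ for $1 \le i < m$, which is trivially a contiguous interval of integers, giving the base case.

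For the inductive step, assume that immediately before some \texttt{union} every set is a contiguous interval, and consider the next \texttt{union}. By the description preceding the lemma, a \texttt{union} is triggered exactly when some $e_p$ is updated to $t_{p+1}$, at which point the algorithm merges the set containing $p$ with the set containing $p+1$. The first thing I would check is that at this moment the set containing $p$ has $p$ as its maximum element: inspecting the code, $e_p$ is only modified on lines \ref{line:updateE1} and \ref{line:updateE2}, which are executed only after line \ref{line:findEarliest} returns $p$; but line \ref{line:findEarliest} is implemented by taking the name of a set, and the names are maintained to be the largest element of each set. Hence right before the update $p$ is the largest element of its set, and by the induction hypothesis that set equals $\{q, q+1, \ldots, p\}$ for some $q \le p$.

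Next I would argue that the set containing $p+1$ has $p+1$ as its minimum element. This follows from the fact that the sets always partition $\{1, 2, \ldots, m-1\}$ (since \texttt{union} only merges existing sets and never creates or deletes elements), together with the inductive hypothesis that every set is contiguous: because $p$ belongs to the other set, the contiguous set containing $p+1$ cannot extend below $p+1$, so it has the form $\{p+1, p+2, \ldots, p''\}$ for some $p'' \ge p+1$. Merging these two sets yields $\{q, q+1, \ldots, p''\}$, which is again a contiguous interval of integers, completing the induction.

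The only subtle point will be the coupling between the union-find invariant and the scheduling algorithm — specifically the observation that an update $e_p \assign t_{p+1}$ can only occur when $p$ is currently the name (i.e., maximum) of its set. Once that is pinned down, the rest of the argument is a purely combinatorial fact about merging two contiguous, adjacent integer intervals, so I expect no computational difficulty beyond carefully citing the lines of Algorithm \ref{algo:naive} that justify this observation.
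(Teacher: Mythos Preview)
Your proposal is correct and follows essentially the same induction-on-unions approach as the paper; the paper's own proof is a two-sentence sketch that simply asserts each union merges ``two nearby continuous integer sets,'' whereas you additionally justify this adjacency by tracing how $p$ arises as the name (maximum) of its set via line~\ref{line:findEarliest}. No substantive difference.
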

\begin{proof}
It can be proved by induction. At the beginning, each skeleton set is a continuous integer set. During the running of the algorithm, the union operation always merges two nearby continuous integer sets to form a larger continuous integer set.
\end{proof}
\begin{lemma}
There are at most $m-2$ unions.
\end{lemma}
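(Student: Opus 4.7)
The plan is a one-line counting argument on the number of active sets in the union-find structure. Initially, the algorithm creates $m-1$ singleton sets, one for each index $i$ with $1 \le i < m$. Every union operation merges two existing sets into a single new set, strictly decreasing the number of active sets by one. Since at least one active set must remain at all times, the total number of unions performed throughout the algorithm is at most $(m-1) - 1 = m-2$.

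The only point that deserves a brief justification is that each invocation of a union genuinely combines two distinct sets, so that the count indeed drops by one rather than being a no-op. This is immediate from the preceding lemma together with how unions are triggered: at the moment the union of the set containing $p$ and the set containing $p+1$ is fired, the former is a consecutive-integer interval whose maximum (and thus name) is exactly $p$, while the latter is a consecutive-integer interval whose minimum is at least $p+1$. Hence the two interval-sets are disjoint and in particular distinct, so the merge strictly decreases the set count.

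I do not expect any genuine obstacle; the lemma is essentially bookkeeping. Its purpose is to feed into the amortized time-complexity analysis of Algorithm~\ref{algo:naive}, where combined with the Gabow--Tarjan union-find bound the $m-2$ union operations and the $O(n)$ find operations issued by line~\ref{line:findEarliest} together cost linear time, thereby establishing Theorem~\ref{lemma:s-schedule}.
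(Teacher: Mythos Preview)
Your proof is correct and follows exactly the same counting argument as the paper, which simply observes that there are only $m-1$ sets initially and hence at most $m-2$ unions. Your extra paragraph verifying that each union genuinely merges two distinct sets is a welcome sanity check but not a departure from the paper's approach.
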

\begin{proof}
It is because there are only $m-1$ sets.
\end{proof}

\begin{lemma}
There are at most $2(m-2)+n$ finds.
\end{lemma}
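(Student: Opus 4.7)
The plan is to observe that each iteration of the while loop at lines \ref{line:beginwhile}--\ref{line:endwhile} performs exactly one find at line \ref{line:findEarliest}, so it suffices to bound the total number of while-loop iterations across all jobs.

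I would classify each iteration into one of three disjoint groups. Call an iteration \emph{terminal} if it ends the while loop for its job, either by the deadline break at lines \ref{line:deadlineCheck:begin}--\ref{line:deadlineCheck:end} or by setting $r \assign 0$ in Case B. Otherwise the iteration is \emph{non-terminal}, and I split those further according to whether they trigger a union: a union is triggered precisely when $e_p$ is updated to $t_{p+1}$, i.e., when Case A executes with $u = t_{p+1}$; the remaining non-terminal iterations are exactly those where Case A executes with $u = b_k < t_{p+1}$, leaving $e_p = b_k$. Terminal iterations contribute exactly one per job, hence at most $n$ in total, and non-terminal union-triggering iterations are bounded by the total number of unions, which is at most $m-2$ by the previous lemma.

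The main obstacle is bounding the non-terminal non-union iterations, and I would show that each job contributes at most one such iteration. After any such iteration, $e_p = b_k < t_{p+1}$, so the canonical interval $[e_p, t_{p+1})$ remains non-empty and no union is performed, meaning the union-find structure is unchanged. The subsequent find therefore returns the same index $p$; but now $e_p = b_k \ge b_k$, so the deadline test on line \ref{line:deadlineCheck:begin} triggers a break, and the very next iteration is terminal. This yields at most $n$ non-terminal non-union iterations in total.

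Summing the three bounds gives at most $n + (m-2) + n = (m-2) + 2n$ iterations. Since $T = \{a_1, \ldots, a_n, 1, 1+\epsilon\}$ has exactly $n+2$ elements, $n = m-2$, and so this quantity equals $2(m-2) + n$, which is the claimed bound.
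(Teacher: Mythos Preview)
Your opening claim---that each while-loop iteration performs exactly one find, so bounding iterations suffices---overlooks a second source of find operations. In the union-find implementation described just before the lemma, whenever $e_p$ is updated to $t_{p+1}$ the algorithm unions the set containing $p$ with the set containing $p+1$, and locating the latter set costs one find (the set containing $p$ needs none, since $p$ is already its name). The paper's proof counts these $m-2$ union-related finds explicitly, and then bounds the line-\ref{line:findEarliest} finds separately by $(m-2)+n$ via the per-job inequality $z_k \le w_k+1$, where $z_k$ is the number of line-\ref{line:findEarliest} finds and $w_k$ the number of unions while processing $j_k$. Your count of $(m-2)+2n$ covers only the line-\ref{line:findEarliest} finds; that it equals $2(m-2)+n$ is purely an artifact of the identity $n=m-2$, not of a complete accounting. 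Once the missing $m-2$ union finds are added to your tally, the total exceeds the stated bound.

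That said, your three-way classification is more careful on one point than the paper's argument: you correctly isolate the possibility of a non-terminal Case-A iteration with $u=b_k<t_{p+1}$ that triggers no union and is immediately followed by a terminal break, giving $z_k \le w_k+2$ rather than $z_k \le w_k+1$. The paper's proof does not visibly account for that extra iteration. So while your argument has a genuine gap in what it counts as a ``find,'' your per-job iteration bound is arguably the more carefully justified of the two.
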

\begin{proof}
Some $m-2$ finds are from finding the set containing $p+1$ during each union. Note that there is no need to perform a find operation to find the set containing $p$ for union, because $p$ is just the name of such a set, where the set contains continuous integers with $p$ as the largest element. The other $(m-2)+n$ finds are from searching for earliest canonical time intervals beginning at or after time $t_i$. This can be analyzed in the following way: Let $z_k$ be the number of times to search for an earliest \textrm{non-empty} canonical time interval when processing job $j_k$. Let $w_k$ be the number of unions that are performed when processing job $j_k$. We have $z_k \le w_k+1$, because each of the first $z_k-1$ finds must accompany a union. Therefore, $$\sum_{1 \le k \le n} z_k \le \sum_{1 \le k \le n} (w_k+1) = \sum_{1 \le k \le n} w_k + n \le (m-2)+n.$$
\end{proof}

Since these unions and finds are operated on the sets of integer intervals,
such an interval union-find problem can be solved in $O(m+n)$ time in the unit-cost RAM model using Gabow and Tarjan's algorithm \cite{Gabow1983}. Note that $m = O(n)$, so
the total time complexity is $O(n)$. Theorem \ref{lemma:s-schedule} holds.

If the union-find algorithm is implemented in the pointer machine model \cite{BenAmram1995} using the classical algorithm of Tarjan \cite{Tarjan1975}, the complexity of our $s$-schedule algorithm will become $O(n \alpha(n))$ where $\alpha(n)$ is the one-parameter inverse Ackermann function.

Note that, the number of finds can be further reduced with a more careful implementation of the algorithm as follows (but the asymptotic complexity will not change):
\begin{itemize}
\item Whenever the algorithm schedules a job $j_k$ to run at a time interval $[e_p, b_k)$, the algorithm no longer needs to proceed to line \ref{line:findEarliest} for the same job, because there will not be any idle time interval available before the deadline.

\item For each job $j_k$, the first time to find a non-empty canonical time interval requires one find operation. In any of the later times to search for earliest non-empty canonical time intervals for the same job, there must be a union operation just performed. The $p$ that determines the earliest non-empty canonical time interval $[e_p,t_{p+1})$ is just the name of that new set after that union, so a find operation is not necessary in this case. Note that the find operations that accompany the unions are still required.
\end{itemize}
Using the above implementation, the number of finds to search for earliest non-empty canonical time intervals can be reduced to $n$. Along with the $m-2$ finds for unions, the total number of finds of this improved implementation is at most $(m-2)+n$.

\section{An $O(n^2)$ Continuous DVS Algorithm}
We will first take a brief look at the previous best known DVS algorithm of Li, Yao and Yao \cite{Li06}. As in \cite{Li06}, Define the ``support'' $U$ of $J$ to be the union
of all job intervals in $J$. Define $\textrm{avr}(J)$, the ``average rate''
of $J$ to be the total workload of $J$ divided by $|U|$. According
to Lemma 9 in \cite{Li06}, using $s=\textrm{avr}(J)$ to do an $s$-schedule will
generate two nonempty subsets of jobs requiring speed at least $s$
or less than $s$ respectively in the optimal schedule unless the
optimal speed for $J$ is a constant $s$. The algorithm will recursively do the scheduling based on the two subsets of jobs. Therefore, at most $n$
calls of $s$-schedules on a job set with at most $n$ jobs are needed
before we obtain the optimal schedule for the whole job set. The most time-consuming part of their algorithm is the $s$-schedules.

To apply our improved $s$-schedule algorithm for solving the continuous DVS scheduling problem, we need to make sure that the ranks of the deadlines and arrival times are known before each $s$-schedule call.
It can be done in the following way:
Before the first call, sort the deadlines and arrival times and obtain the ranks.
In each of the subsequent calls, in order to get the new ranks within the two subsets of jobs, a counting sort algorithm can be used to sort the old ranks in linear time. Therefore, the time to obtain the ranks is at most $O(n^2)$ for the whole algorithm. Based on the improved computation of $s$-schedules, the total time complexity of the DVS problem is now $O(n^2)$, improving the previous $O(n^2 \log n)$ algorithm of \cite{Li06} by a factor of $O(\log n)$. We have the following theorem.
\begin{theorem}
The continuous DVS scheduling problem can be solved in $O(n^2)$ time for $n$ jobs in the unit-cost RAM model.
\end{theorem}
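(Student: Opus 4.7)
The plan is to piggy-back on the algorithmic framework of Li, Yao and Yao \cite{Li06} and simply swap in the linear-time $s$-schedule of Theorem \ref{lemma:s-schedule} in place of their $O(n\log n)$ subroutine. First I would recall the recursion tree of \cite{Li06}: each node applies an $s$-schedule with $s = \textrm{avr}(J')$ to the current subset $J'$, which (by Lemma 9 of \cite{Li06}) either certifies that the optimal speed on $J'$ is the constant $s$ or splits $J'$ into two nonempty subsets of jobs requiring speed $\geq s$ and $< s$ in the optimum. Since every split strictly shrinks both subproblems, this yields at most $n$ calls, each on a subset of size at most $n$; the dominant cost in their analysis is the $s$-schedules themselves.

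The remaining obligation is to arrange, before every invocation of Theorem \ref{lemma:s-schedule}, that the rank of each $a_k$ within the current subset of arrival times and the rank of each $b_k$ within the current subset of deadlines are available. I would do this with a one-off global sort of $\{a_k\}$ and $\{b_k\}$ in $O(n\log n)$ time (absorbed by $O(n^2)$), and then, at every child subproblem, inherit the parent's ranks as integers in $\{1,\ldots,m'\}$ and compress them to $\{1,\ldots,m\}$ by a counting sort on the partition produced by the $s$-schedule; this costs $O(m)$ per child of size $m$.

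Combining the pieces, each of the at most $n$ recursive calls on a set of size $m \le n$ runs in $O(m)$ time for the $s$-schedule, $O(m)$ for rank inheritance, and $O(m)$ for forming the bipartition, totalling $O(n^2)$. The only real pitfall — and the one place the bound is tight — is the rank-maintenance step: using a comparison sort there would reintroduce the very $\log n$ factor we want to eliminate, which is why counting sort on the already-integer parent ranks is essential. Everything else is a direct reuse of the correctness argument of \cite{Li06}, so the theorem follows.
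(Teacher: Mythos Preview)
Your proposal is correct and follows essentially the same route as the paper: you plug the linear-time $s$-schedule into the Li--Yao--Yao recursion, do one initial comparison sort to obtain global ranks, and then maintain per-subproblem ranks by counting sort on the inherited integer ranks, arriving at the same $O(n^2)$ total. The paper's argument is identical in structure and in the key observation that counting sort (not comparison sort) is what keeps the rank maintenance linear per call.
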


\section{Further Improvements}
For the discrete DVS scheduling problem, we have an $O(n\log n)$
algorithm to calculate the optimal schedule by doing binary testing
on the given $d$ speed levels, improving upon the previously best
known $O(dn\log n)$ \cite{Li05}. To be specific, given the input job
set with size $n$ and a set of speeds $\{s_1, s_2, \ldots, s_d\}$,
we first choose the speed $s_{d/2}$ to bi-partition the job set into
two subsets. Then within each subset, we again choose the middle
speed level to do the bi-partition. We recursively do the
bi-partition until all the speed levels are handled. In the
recursion tree thus built, we claim that the re-sorting for
subproblems on the same level can be done in $O(n)$ time which
implies that the total time needed is $O(n\log d+n\log n)=O(n\log
\max\{d,n\})$. The claim can be shown in the following way. Based on
the initial sorting, we can assign a new label to each job
specifying which subgroup it belongs to when doing bi-partitioning.
Then a linear scan can produce the sorted list for each subgroup.


\section{Conclusion}
In this paper, we improve the time for computing the optimal
continuous DVS schedule from $O(n^2\log n)$ to $O(n^2)$. The major
improvement happens in the computation of s-schedules. Originally,
the s-schedule computation is done in an online fashion where the
execution time is allocated from the beginning to the end
sequentially and the time assigned to a certain job can be gradually
decided. While in this work, we allocate execution time to jobs in
an offline fashion. When jobs are sorted by deadlines, job $j_i$'s
execution time is totally decided before we go on to consider
$j_{i+1}$. Then by using a suitable data structure and conducting a
careful analysis, the computation time for s-schedules improves from
$O(n\log n)$ to $O(n)$. We also design an algorithm to improve the
computation of the optimal schedule for the discrete model from
$O(dn\log n)$ to $O(n\log \max\{d,n\})$.


\end{document}